\DeclareMathOperator*{\argmin}{arg\,min}
\newtheorem{prop}{Proposition}
\title{\LARGE \bf Learning Hierarchical Control Systems for Autonomous Systems with Energy Constraints}
\author{Charlott Vallon$^{1}$, Mark Pustilnik$^{1}$, Alessandro Pinto$^{2}$, Francesco Borrelli$^{1}$ %<-this % stops a space
%\thanks{*This work was not supported by any organization}% <-this % stops a space
\thanks{$^{1}$C. Vallon, M. Pustilnik and F. Borrelli are with the Department of Mechanical Engineering,
        University of California, Berkeley,  Berkeley, CA 94720}%
\thanks{$^{2}$A. Pinto is with the NASA Jet Propulsion Laboratory, California Institute of Technology, Pasadena, CA 91011}%
%\thanks{$^{3}$B. Stellato is with the Department of Operations Research and Financial Engineering, Princeton University, Princeton, NJ 08544}
}
\begin{document}

\maketitle
\thispagestyle{empty}
\pagestyle{empty}

%%%%%%%%%%%%%%%%%%%%%%%%%%%%%%%%%%%%%%%%%%%%%%%%%%%%%%%%%%%%%%%%%%%%%%%%%%%%%%%%
\begin{abstract}
This paper focuses on the design of hierarchical control architectures for autonomous systems with energy constraints. We focus on systems where energy storage limitations and slow recharge rates drastically affect the way the autonomous systems are operated. Using examples from space robotics and public transportation, we motivate the need for formally designed learning hierarchical control systems. 
We propose a learning control architecture which  incorporates learning mechanisms at various levels of the control hierarchy to improve performance and resource utilization. 
The proposed hierarchical control scheme relies on high-level energy-aware task planning and assignment, complemented by a low-level predictive control mechanism responsible for the autonomous execution of tasks, including motion control and energy management. 
Simulation examples show the benefits and the limitations of the proposed architecture when learning is used to obtain a more energy-efficient task allocation.

\end{abstract}

\section{Introduction}

This papers focuses on autonomous systems operating in environments where energy management is crucial. Efficient energy use ensures mission completion and prevents situations where energy depletion can lead to system failures or significant operational disruptions. We examine the challenges of designing control systems that can navigate the complexities of energy management in environments where the relationship between the system state and energy consumption is not straightforward.

We motivate our work with two real-world examples in the space-robotic  and public transportation domains (see Fig.~\ref{fig:real-world_scenario}) to demonstrate
the limitations of the current  approach and show how learning could overcome these limitations if properly implemented at each level of the control hierarchy.

\begin{figure}[ht!]
	\centering
\includegraphics[width=0.45\textwidth]{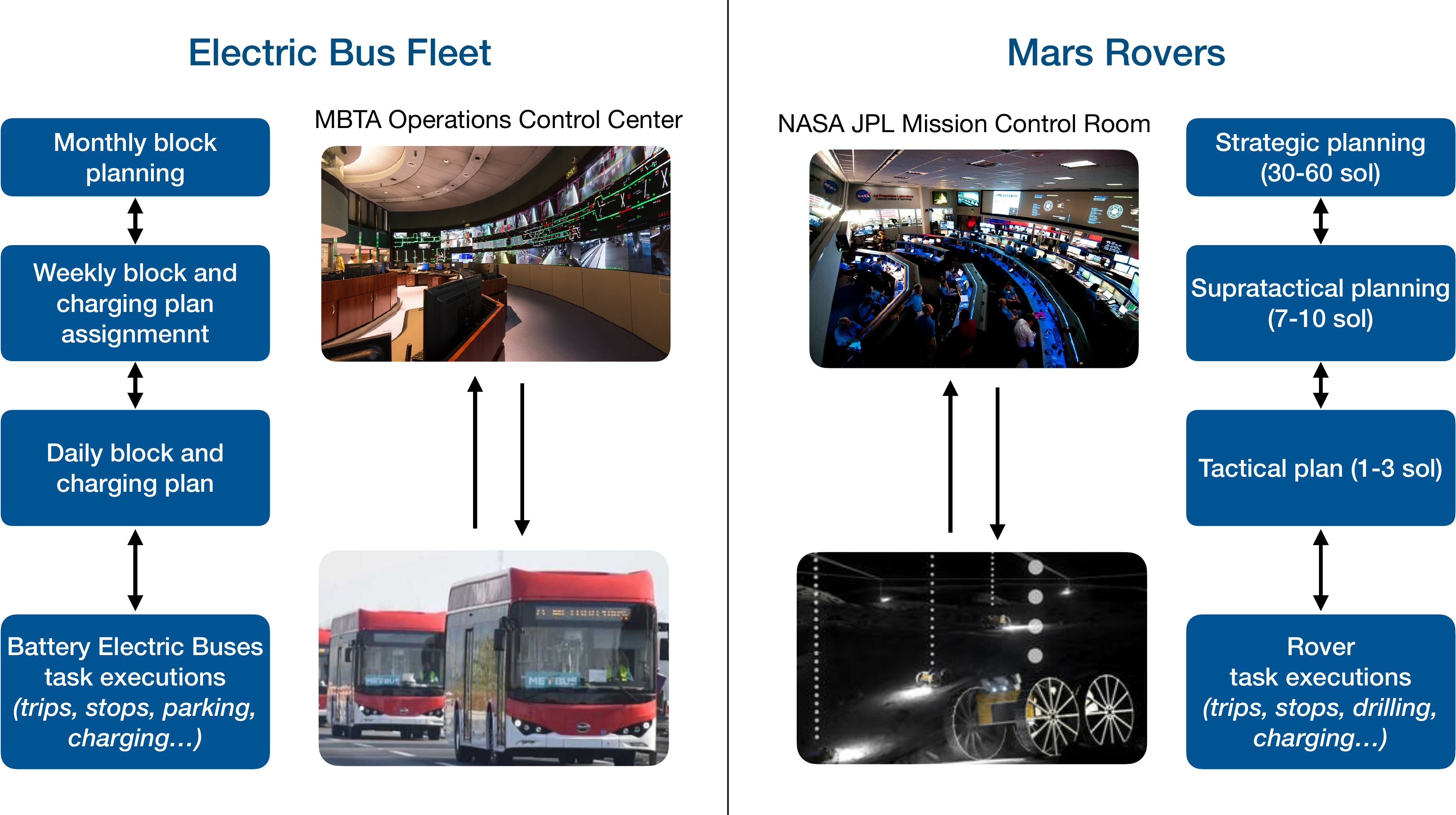}
	\caption{Real-world scenarios for Transit and Space Cyber-Physical Systems.}
	\label{fig:real-world_scenario}
\end{figure}

%\begin{figure}[h]
%\begin{center}
%    \includegraphics[width=0.8\textwidth]{figures/figure123.pdf}
%    \label{fig:figure123}
%    \caption{Combined figure to be redone and labels adjusted}
%\end{center}
%\end{figure}

Perseverance rovers on the surface of Mars have a primary energy source and rechargable batteries to meet peak power demands~\cite{refnasa}. The problem of determining an optimal schedule of scientific activities to be completed by the rovers (e.g. sample collection, surface mapping) is constrained by the state of charge of the batteries and by the available communication windows with the ground operations team on Earth. The planning process is leveled as shown on the right side of Fig.~\ref{fig:real-world_scenario}~\cite{gainesProductivityChallengesMarsa}. The first level generates a strategic plan that typically spans 30-90 Martian days (sols). The second level is concerned with a supratactical plan which typically spans 7-10 sols. The supratactical plan is implemented as a sequence of tactical plans that span 1-3 sols. Planning at each level has to take into account energy consumption and generation, and the time required to execute an activity, which depends on low-level control strategies and environmental conditions (the context). Due to uncertainty, plans are computed using conservative models. It has been shown that the estimated duration of activities is on average 28\%~\cite{gainesProductivityChallengesMarsa,gainesProductivityChallengesMars2016} longer than the actual duration, and that energy requirements are much lower than predicted~\cite{gainesProductivityChallengesMars2016}. 
Thus rovers could be more productive if better models could be learned.

%\begin{wrapfigure}{r}{.6\textwidth}
%	\centering
% \vspace{-1em}
%	\includegraphics[width=0.55\textwidth]{figures/architecture.png}
%	\caption{Transit Fleet Control Architecture (left) and Mars Rover Architecture (right).}
%	\label{fig:bebarc}
% \vspace{-1em}
%\end{wrapfigure}

%In the context of transit agencies, a fixed schedule defines the routes, each visiting a set of stop locations, that will be served each day of the week for a certain period of time (typically a few months). In this context, a trip refers to a route being driven at a certain time of the day and day of the week, so that the corresponding bus stop locations are visited at pre-defined arrival and departure times.

Transit agencies bundle trips into components to be assigned to a certain bus and driver. A typical planning process is shown on the left side of Fig.~\ref{fig:real-world_scenario}, and has the objective of satisfying public demand, maximizing service, minimizing operating cost, and meeting the bus fleet size and operator headcount.
While conventional buses of the same type (e.g. occupancy and geometry) are fully interchangeable between components, the use of Battery Electric Buses (BEBs) limit their assignment to certain components depending on driving range, battery capacity, current state-of-charge (SOC) level, and the varying energy demand of each BEB. 
%Normally, the depot operators suggested component assignments to the drivers, up to 1-2 hours before the start of a component (pull-out time). 
Prediction of recharging time or the remaining SOC after completing a component are made with simplistic models such as miles per gallon of gasoline-equivalent (MPGe) and miles-per-charge and can be off by as much one order of magnitude~\cite{FBJG2023}. Time and energy associated with a component depend on the implementation and choices of low-level control systems and the context, including traffic, weather, topography, driver behavior, and passenger occupancy.

In both examples, a set of autonomous platforms (referred to as vehicles, agents or robots) is tasked with high-level goals, and a  hierarchical control strategy is used to refine discrete decisions at the higher level into continuous control actions at the lower level. These decisions must take energy consumption and generation into account, but both processes are subject to a high degree of uncertainty.
Inaccurate energy models used by lower level controllers can manifest as erroneous events at the higher level and can lead to a complete mission failure such as BEBs with depleted batteries to be towed to the depot, components of work being canceled, or an entire monthly plan being scrapped, with major disruption to the public service~\cite{FBJG2023}.
In space robotic-missions, mission failures are avoided by overestimating energy requirements, with tremendous impact on mission performance, duration, cost, and science return.

A number of hierarchical control frameworks have been proposed for energy-constrained systems, including for applications ranging from smart grids \cite{stoustrup} to thermal management systems \cite{alleyne2017, sun2023}. 
Here we propose a framework specifically centered around incorporating learning at all levels of the control hierarchy of the energy constrained system to reduce conservatism with enormous  benefits. Learning at the lowest level refers to the ability to use data to update predictive models of energy and safety bounds contextualized on the task type and state of the environment. Learning at the higher levels  refers to the ability of using data  to  reassign tasks, redefine tasks and replan based on new learned models and new tasks. 

%In a recent work by the PI~\cite{FBJG2023} the authors have shown on actual experiment with a transit agency where learning was used to  increase electric bus utilization by 50\%. The learning system did not provide plan execution guarantees robust again  uncertainties. This is a remaining barrier to adoption that, if eliminated, would allow for technology infusion and a massive reduction in CO2 emissions. In space applications this will lead to tremendous speed up in task completions, which in turns will allow faster exploration and increased science return.

In this paper we propose a learning hierarchical control system to address the issue identifies above. The proposed hierarchical control scheme introduces a high level energy-aware task planning and assignment framework for autonomous systems, complemented by a low-level predictive control mechanism responsible for the autonomous execution of tasks, including motion control and energy management. 
The learning-based approach  improves the accuracy of energy consumption models at all levels of the control hierarchy. By leveraging data to refine these models, we reduce the conservatism of energy estimates, leading to better performance and resource utilization.
Simulation examples show the benefits and the limitations of the proposed architecture when learning is used to obtain a more energy  efficient task allocation.

% Our approach leads to theoretical questions that remain unaddressed within the current scope. These include:
% \begin{enumerate}
%     \item The definition of criteria for safe and efficient planning within hierarchical control systems.
%     \item The elucidation of methods for integrating learning without leading to system infeasibility or instability.
%     \item The development of real-time capable approaches that consider the computational constraints of low-level controllers and communication delays.
% \end{enumerate}

% Some preliminary questions are explored in \cite{vallon2024learning2}, and go beyond the main objective of this paper. 

\section{Literature Review }
The ideas discussed in this paper  cover areas of hierarchical planning, multilevel optimization and learning for dynamics and control.
For these topics, the literature is extensive and an exhaustive review goes beyond the scope of this paper.
We limit this discussion to recent relevant work.

\paragraph{Hierarchical planning systems}
\label{par:hierarchical-planning-systems}
%Hierarchical planning systems are common in all application domains. They have a long history: hierarchical organization in societies dates back thousands of years. Supervisory control – a leveled approach where a supervisor manages a set of lower-level control systems – appeared in the last century. 
%In the context of autonomous systems and AI, two views can be identified: the algorithmic view, and the architectural view. 
The Hierarchical Task Networks (HTNs) \cite{Erol1994UMCPAS} framework models a planning problem with non-primitive tasks, that can be decomposed into a network of tasks by one or more methods, and primitive tasks that cannot be decomposed. Given an HTN domain model and an initial task network as a goal, a planning system~\cite{SchreiberLilotane2021,nau2021gtpyhop} decomposes the goal by applying methods to tasks until only primitive tasks are left in the plan. The development of the domain model task and several approaches have emerged in recent years to learn HTNs from demonstrations and plan traces (see \cite{LearningHTNbyObservationNejati2006,pmlr-v155-chen21d,HTNMakerNDHogg2009,zhuoLearningHierarchicalTask2014,hayesAutonomouslyConstructingHierarchical2016,liLearningProbabilisticHierarchical2014a,hafnerDeepHierarchicalPlanning2022}). Hierarchical reinforcement learning~\cite{Barto2003RecentAI} is another framework based on (Semi) Markov Decision Processes to learn hierarchical policies. Both approaches are monolithic in nature and do not support a heterogeneous and modular representation of the planning problem.

The architectural approach to hierarchical planning features several levels of decision-making agents. A unified framework known as the 4D/RCS architecture is presented in \cite{AlbusTheoryOfIntelligence1991,Albus4DRCS2002}. 
%In this work, each agent has its own sensory processing, world model, and behavior generator. levels have clearly separable time scales. The behavior generated by an agent at one level is sent to agents at the lower levels as goals. These agents send back the results of planning for such goals and executing the resulting plans. 
Other hierarchical architectures include the general framework in~\cite{AlamiArchitectureForAutonomy1998}, CLARAty \cite{VolpeClaraty2001}, architectures for self-driving vehicles used in the DARPA Grand Challenge \cite{SpecialIssueDarpaGrandChallenge2006}, and our recent proposal \cite{PintoArchitecture2019} focusing on modularity, openness, and reuse. The architectural approach has several advantages such as the freedom of choosing different representations of the decision problem and the solution method for each agent, and the ability to integrate solutions under control of different stakeholders. \emph{To the best of our knowledge, there is no rigorous framework for the analysis and design of such leveled architectures that considers learning.}

\paragraph{Safe Reinforcement Learning and Iterative Learning Control}
%Model-free reinforcement learning (RL) has been successfully used to solve tasks with super-human  performance in video games and board games such as Go \cite{mnih2015human, silver2016mastering}.  Despite excellent performance on locomotion \cite{kober2013reinforcement, levine2014learning, lillicrap2015continuous, schulman2015high, tedrake2004stochastic} and manipulation \cite{krishnan2017ddco, levine2016end, levine2016learning, levine2015learning} tasks, model-free reinforcement learning (RL) remains prohibitively data intensive, requiring hundreds of thousands of iterations which cannot be afforded in general for robotic systems. While there have been various attempts to improve the sample efficiency of RL \cite{gu2016q, gu2016continuous, schaul2015prioritized}, the general trend has been to develop increasingly complex models, leading to a reproducibility crisis. Recent studies demonstrate that many RL methods are not robust to changes in hyperparameters, random seeds, or even different implementations of the same algorithm~\cite{henderson2017deep, islam2017reproducibility}. Fragile algorithms cannot be integrated into mission critical control systems without significant simplification and robustification.

Various efforts have been made towards improving safety and robustness of model-free and model-based learning control \cite{14-gu2023review}. 
%We skip a literature review in this paper but highlight that in existing methods learning is restricted to a single level of the hierarchical control. Its effect on the other levels is neglected, not studied or  assumed irrelevant.
%%%%% ARXIV SPECIFIC  V %%%%%
One approach is to utilize compositional or hierarchical frameworks where safety is maintained by a master controller or safety filter~\cite{3-ivanov2021compositional}. Here the output of a Reinforcement Learning (RL) policy is ``filtered" by a model-based or model-free pre-trained safety filter to ensure the proposed action cannot lead to unsafe conditions~\cite{11-srinivasan2020learning, 12-thananjeyan2021recovery, 13-bharadhwaj2021conservative}. Commonly proposed model-based safety filters, sometimes referred to as safety critics, include Control Lyapunov Functions which alter that RL outputs to ensure the system descends along a predefined cost function~\cite{4-chow2018lyapunovbased, 5-chow2019lyapunovbased, 6-salamati2020lyapunov}, Control Barrier Functions that correct the RL output to ensure constraint satisfaction~\cite{8-cheng2019endtoend, 9-li2019temporal, 10-emam2022safe}, or both~\cite{23-s2rl2023}.

% Other approaches aim to learn RL policies that can operate under disturbances and generalize across tasks. Robust adversarial RL takes a game-theoretic approach which assumes disturbance will actively try to destabilize the system~\cite{19-lütjens2020certified, 24-pinto2017robust}. Domain randomization applies standard RL methods to systems with purposefully perturbed parameters to simulate possible disturbances within an expected range~\cite{20-sadeghi2017cad2rl, 21-Loquercio_2020}. These perturbations can be randomly or strategically chosen to optimize exploration. Most of these methods make no assumptions about dynamics or disturbances, and aim for empirically robust performance rather than theoretical guarantees of stability and safety.
Other approaches aim to learn RL policies that can operate under disturbances and generalize across tasks. Robust adversarial RL takes a game-theoretic approach which assumes disturbance will actively try to destabilize the system~\cite{19-lütjens2020certified, 24-pinto2017robust}. Domain randomization applies standard RL methods to systems with purposefully perturbed parameters to simulate possible disturbances within an expected range~\cite{20-sadeghi2017cad2rl}. These perturbations can be randomly or strategically chosen to optimize exploration. Most of these methods make no assumptions about dynamics or disturbances, and aim for empirically robust performance rather than theoretical guarantees of stability and safety.

In the control world, Iterative Learning Control (ILC) is a strategy that allows learning from previous iterations to improve its closed-loop tracking performance. ILC has been successfully used in industry, and is behind impressive agile robotics implementations~\cite{hock-auro19}. Recently, we have proposed a new ILC technique, Learning Model Predictive Control (LMPC), where the system model, the terminal set and the terminal cost are constructed  from  measurements  and input data coming from  trajectories of previous iterations~\cite{8039204, rosolia2019learning}. 

In the example presented in this paper we use the LMPC  technique~\cite{8039204,  rosolia2019learning}, where the system model, the terminal set and the terminal cost are constructed  from  measurements  and input data coming from  trajectories of previous iterations. Alternative learning techniques can be used in place of the chosen one as long as state and input constraint satisfaction is guaranteed.

%%%%% ARXIV SPECIFIC ^ %%%%%

\begin{comment}
Control systems autonomously performing a repetitive task have been extensively studied in the literature \cite{c3,c6,c7,c8,c21,c22}. 
%One task execution is often referred to as ``iteration" or ``trial". 
Iterative Learning Control (ILC) is a strategy that allows learning from previous iterations to improve its closed-loop tracking performance. 
%The idea is to leverage repetition, using feedback of the errors to  improve the system performance.
%A common example is learning to track a trajectory, and the input control is improved by adjustment with respect to the deviation from the desired trajectory in previous iterations. 
ILC has been successfully used in industry, and is behind impressive agile robotics implementations~\cite{schollig2009optimization,ACCrosolia2017autonomous}. Model-based Predictive Control for Batch processes, called Batch-MPC (BMPC), combines MPC with ILC \cite{c34}.BMPC is based on a time-varying MIMO system that has a dynamic memory of past batches tracking error. This idea has been further validated and extended in  \cite{c33, c34, c35, c6, c4, liu2013nonlinear}.
%In all aforementioned papers the control goal is to minimize a tracking error under the presence of disturbances. Furthermore, the reference signal is known in advance and does not change at each iteration.
Recently, we have proposed a new Learning MPC technique where the system model, the terminal set and the terminal cost are constructed  from  measurements  and input data coming from  trajectories of previous iterations \cite{IFACrosolia2017learning,ILMPC, ACCrosolia2017autonomous,  rosoliarobust}. 
%What sets this work apart from prior art is that it allows considerably more challenging problem specification than simply trajectory tracking. For example, it can directly improve execution time of a task even without a specified reference trajectory. Under no model mismatch and changes in the environment, the LMPC guarantees performance improvement, safety and stability.
%The main downsides of the ILC paradigm is that the system needs to be severely constrained or the model and task well-specified for these techniques to work reliably. In this project, we will examine how to extend LMPC to deal with model mismatch, environmental changes, and underspecified objectives.
\end{comment}

\section{Hierarchical Control Design}\label{sec:hierarchical}
We consider a group of $M$ autonomous battery-powered mobile agents that must perform a repetitive set of tasks at $L$ locations. Without loss of generality, we assume that the period of repetition is one day.
%This example will describe wheeled rover robots performing a set of scientific tasks, but note that the same framework can easily be adapted to electric buses in Detroit. 

%while taking the energy constraints of the robots into account. 
We consider a graphical abstraction of a task space map, with nodes $\mathcal{D} = \{D_0, D_1,\ldots,D_L\}$. A node $D_0$ serves as the depot, where all agents are located at the start of the day. A set of $L$ other nodes $\{D_1,\ldots,D_L\}$ represents locations that agents must travel to in order to complete certain tasks (e.g. ``drill", ``collect", ``measure", ``charge"). A subset of these nodes contain charging stations, where agents may be able to replenish their batteries without returning to the depot.
For each pair $(D_i,D_j)$, with $i,j \in [0,L]$, there exists a set of possible routes $R_{i,j}$ that an agent could use to travel between them.
Note that the time and energy required for an agent to travel any route are complex functions of topological conditions, weather, battery dynamics, and vehicle wear. 

Given a set of tasks to be completed, initial states of charge of each agent, and goal states of charge each agent should have before the start of the next day, the problem studied in this paper is the optimal assignment of tasks to agents and the subsequent optimal execution of the tasks by the agents. Without an understanding of energy consumption for each task, agent utilization will be low. 

Here, we consider a two-level hierarchical learning controller to solve this problem. The upper level (Level 1) produces a plan consisting of task assignment and routing for each agent to complete by the end of the day (see Fig.~\ref{fig:framework}). The lower level (Level 0) implements this plan.\footnote{
We note that in real-world scenarios the number of hierarchical levels can be much higher and include different levels for behavioral planning, path planning, and path following \cite{chaal2020, dinh2020, DIXIT201876}.}
As agents complete tasks and travel along routes, estimates of time and energy requirements along each route 
%as a function of weather conditions and battery dynamics 
are improved.
Improved estimates can be leveraged to increase the utilization of each vehicle.

Hierarchy levels contain Situational Awareness (SA) components and Decision Maker (DM) components. SA components receive information about the system and environment states and determine a) whether the plan is being executed as expected, and b) whether and how to update the model being used at the respective hierarchy level. 
Decision Maker components receive information about the system state, the model estimate, and whether the plan is being executed as expected. Based on this information, the component determines whether a new plan has to be created, and what that plan should be. 

Signals passed between components include:
\begin{itemize}
    \item $\hat{m}_{i,j}$ corresponds to the model estimates of agent $j$ used by level $i$. A level of a hierarchy may contain multiple agents and thus multiple models. The form of the model varies between hierarchy levels, but always represents estimates of time and energy requirements. 
    \item $\hat{x}_{i,j}$ is the estimate of the state of agent $j$ used by level $i$ of the hierarchy, based on observations.  
    \item $\pi_{i,j}$ is the plan or goal assigned by hierarchy level $j$ to agent $i$. 
    \item $\hat{\pi}_{i,j}$ describes the status of the plan, including past performance and future projections. Thus each $\hat{\pi}_{i,j}$ contains similar information to ${\pi}_{i,j}$.
    Differences between $\pi$ and $\hat{\pi}$ are what can trigger re-planning in any level. 
    \item $\hat{o}$ serves as the main signal passing observations/information from lower to higher hierarchy levels, with $\hat{o}_{i,j}$ representing information about agent $j$ sent to level $i$ of the hierarchy.
\end{itemize} 

\begin{figure}
    \centering
    \includegraphics[width=0.9\columnwidth]{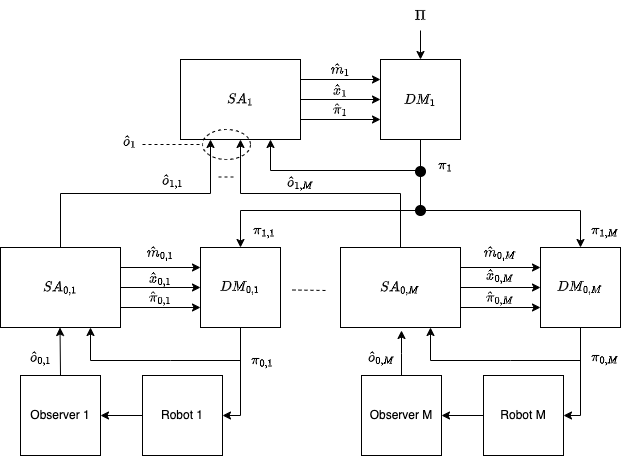}
    \caption{Framework with two levels}
    \label{fig:framework}
\end{figure}

In the remainder of the paper, we refer to the collection of signals corresponding to all agents by dropping the $j$ index, i.e. $\hat{m}_1$ refers to the set of models used to describe agents at Level 1. 

\subsection*{Upper Level: Task Assignment Level}
% CV proposed outline 
% 1. Level 1 does [blah]. Relevant states of interest are time, state of charge, and node location. Solves over graph. 
% 2. Each SA component SA1i receives observations o from lower level SA components about the state of the ith agent. Produces the following signals:
    % state is p,t,soc
    % event-driven model m1 
    % plan state pihat
% 3. As more observations are received, we improve approximations of Tij, Eij... ("learning" paragraph")
% 4. Decision Making component creates a plan which [blah]. This plan is produced by SOLVING PROBLEM...we need to write out the problem here at least in some capacity, no? 
% 5. Re-planning paragraph, just copied. 

Level 1 assigns an ordered sequence of tasks for each agent to complete by the end of the day. The relevant states of interest at this level are the location and state of charge of the vehicles, and how much time has passed since arriving at the location. 
At this level, the task assignment problem is treated as a path planning problem over a fully connected graph, where the graph $G(\mathcal{D},\mathcal{R})$ consists of nodes $\mathcal{D}$ and edges $R_{i,j}$ connecting each pair of nodes ($D_i$, $D_j$) with $i,j \in [0,L]$. Each route $R_{i,j}$ is associated with estimates of time and energy expenditures $(\hat{T}_{i,j},\hat{E}_{i,j})$ for that route. These estimates are weights associated with each edge $R_{i,j}$. We refer to the estimates over all routes as $\hat{T}, \hat{E}$.

The upper level Situational Awareness component ($SA_1$) receives observations $\hat{o}_1$ from the lower-level component ($SA_0$), containing information about each vehicle's node location and estimated state of charge.
Based on these measurements, $SA_1$ tracks the execution of the Level 1 plan and updates its respective internal models, including $(\hat{T},\hat{E})$. 
Specifically, this component produces the following signals:
\begin{itemize}
    \item $\hat{x}_1$ is an estimate of the system state at Level 1.
    %, based on sensor information received from Level 0. 
    These state estimates are updated for each vehicle in an event-driven manner, whenever a vehicle reaches a new node $D_i$. At this level, the system state contains information about the position and state of charge (SOC) of each agent and how much time has passed since arriving at the location:
    \begin{align}
        \hat{x}_1(k) &= [\hat{x}_{1,1}(k), \dots, \hat{x}_{1,M}(k)], \\ 
        \hat{x}_{1,i}(k) &= [\hat{P}_i(k), ~\hat{SOC}_i(k), ~\hat{T}_i(k)]
    \end{align}
    where $\hat{P}_i(k)$ estimates the $k$th node visited by the $i$th agent, and $\hat{SOC}_i(k)$ and $\hat{T}_i(k)$ estimate state of charge of $i$th the agent and duration time of event $k$. 

    \item %Each route $r \in R_{i,j}$ is associated with estimates of time and energy expenditures $(\hat{T}_{i,j},\hat{E}_{i,j})$ for that route. These estimates are based on prior observations $\hat{o}_1$, and are considered weights on each edge $R_{i,j}$. 
    The event-based model $\hat{m}_{1,i}$ predicts the evolution of the upper level state as a function of these estimates:
    \begin{align}
        \hat{x}_{1,i}(k+1) = \hat{m}_{1,i}(\hat{x}_{1,i}(k), u_{1,i}(k),\hat{T}, \hat{E})
    \end{align}
    where the estimates of time and energy expenditures $\hat{T}, \hat{E}$ are based on prior observations $\hat{o}_1$.
    The input $u_{1,i}(k) = [u_{1,i}^r(k), u_{1,i}^c(k)]$ represents the routing and charging input planned for agent $i$ at event $k$. The routing input $u_{1,i}^r(k) \in \{0,1\}^{L \times L}$ tracks the route assigned to agent $i$ at event $k$, where $u_{1,i,a,b}^{r}(k)=1$ indicates the agent $i$ is assigned to travel from node $D_a$ to node $D_b$ at event $k$. If agent $i$ is assigned to charge at event $k$, the charging input $u_{1,i}^c(k) \in \mathbb{R}_{+}$ indicates for how much time the agent is assigned to charge. Note that we must have $u_{1,i}^c(k) = 0$ whenever agent $i$ is not at a charging station node at event $k$.
    
    \item The plan state $\hat{\pi}_1$ describes the progress towards completing the planned tasks. In this example, this consists of the evolution of states $\hat{x}_1$ since the beginning of the task and a forecast of the states until the end of the task:
    \begin{equation}
        \hat{\pi}_1(k) = [\hat{x}_1(0), \dots, \hat{x}_1(k), \bar{\hat{x}}_1(k+1|k), \dots, \bar{\hat{x}}_1(T|k)], \nonumber
    \end{equation} 
    where $\bar{\hat{x}}_1(k+1|k)$ is the prediction made at time step $k$ of the state at time step $k+1$. These predictions are constructed using the estimated Level 1 model $\hat{m}_1$.
\end{itemize}

\textbf{Learning:} As additional observations $\hat{o}_1$ are received by $SA_1$, approximations of the route models $(\hat{T}, \hat{E})$ are improved. 
During repeated execution of a task assignment, the agent keeps track of the past plan state, which records the time and energy required for previously executed tasks. These can be used to update current estimates.
The learning process could be continuous or triggered based on a threshold on the total prediction errors.
Correctly choosing this threshold and how often to update the learned model is not trivial, and incorrect choices can lead to system instability (as demonstrated in the simulation example in Sec.~\ref{sec:simulation}).

% What to say about the frequency here? We only do this once a day, but should we allow it to do so at each event k? 
The Decision Making component $DM_1$ creates a plan $\pi_1$ containing a sequence of nodes each agent will visit along with the expected time and state of charge at each node:
\begin{align}
    {\pi}_1 &= [{\pi}_{1,1}, {\pi}_{1,2}, \dots, {\pi}_{1,M}], \\ 
    {\pi}_{1,i} &= \begin{bmatrix}
                         D_{ref,1,i} & SOC_{ref,1,i} & k_{ref,1,i}\\
                        D_{ref,2,i} & SOC_{ref,2,i} &  k_{ref,2,i} \\
                        \dots & \dots & \dots 
\end{bmatrix}.\label{eq:planagenti}
\end{align}
The planned delivery route for the $i$th agent (${\pi}_{1,i}$) is an ordered list of nodes the agent is assigned to visit ($D_{ref,j,i}$ is the $j$th node agent $i$ will visit). The plan ${\pi}_{1,i}$ also contains estimates of the time $k_{ref,j,i}$ and state of charge $SOC_{ref, j, i}$ when agent $i$ reaches its $j$th assigned node $D_{ref,j,i}$. 

% \textbf{To add: The plan ${\pi}_1$ is constructed by solving...}
% - assignment should correspond to assignment that maximizes some global objective according to our current model estimate. for example, could mean jointly completing as many tasks as possible while satisfying time and state of charge constraints (this would, for example, correspond to ECVRP). 
% - could move the bullet points from section V here, describing the constraints that must be satisfied.
% - for example, solved using vehicle routing problem? 
The plan ${\pi}_1$ corresponds to the task assignment that optimizes a global objective according to the upper level's model estimate $\hat{m}_1$. For example, it may be desirable for agents to jointly complete as many tasks as possible while satisfying time and state of charge constraints; in this case, ${\pi}_1$ could contain the solution of an Energy Constrained Vehicle Routing Problem (ECVRP, see \cite{MONTOYA201787})  based on most recent estimates $\hat{T}, \hat{E}$. This scenario will be explored in Sec.~\ref{example}. 

\textbf{Re-planning:} At this level, re-planning consists of assigning a new route to at least one agent. In our framework this is done only when a agent reaches a task node. Re-planning can be triggered by a divergent state of the plan, or as a result of a significant change in the predictive model. 

\subsection*{Path-Planning/ -Following Level}

Level 0 contains the lower level controllers that convert the plan from Level 1 into actuation commands for each agent. 
Prediction models at this level are different than Level 1. 
At Level 0, each of the $M$ agents is assigned its own Situational Awareness and Decision Making components.

%\subsubsection{$SA_0$}
Each Situational Awareness component $SA_{0,i}$ receives observations $\hat{o}_{0,i}$ about the state of the $i$th agent. 
The component produces the following signals:
\begin{itemize}
    \item The signal $\hat{x}_{0,i}(k)$ is an estimate of the system state of agent $i$ at time step $k$, including an estimate of its state of charge, $\hat{SOC}_{0,i}(k)$.
     %     $$\hat{x}_0(k) = [\hat{z}(k), \hat{y}(k), \hat{v}(k), \hat{\theta}(k), \hat{SOC}(k)]$$
     % where $\hat{z}(k)$ and $\hat{y}(k)$ estimate the position of the agent, $\hat{v}(k)$ the velocity, $\hat{\theta}(k)$ the heading angle, $\hat{SOC}(k)$ the estimate of the state of charge. 
    
    \item A time-based model $\hat{m}_{0,i}$ predicts the evolution of the agent's state with fixed update frequency: 
    % \begin{subequations}\label{eq:lowlevelunicyle}
    % \begin{align}
    % \hat{x}_{0,j}(k+1)& = \hat{m}_{0,j}(\hat{x}_{0,j}(k), u_{0,j}(k)) \\
    % \hat{z}(k+1) &= \hat{z}(k) + \hat{v}(k) \cos{(\hat{\theta}(k))} dt \\ 
    % \hat{y}(k+1) &= \hat{y}(k) + \hat{v}(k) \sin{(\hat{\theta}(k))} dt \\
    % \hat{\theta}(k+1) &= \hat{\theta}(k) + \delta(k) dt \\
    % \hat{v}(k+1) &= \hat{v}(k) + a(k) dt \\
    % \hat{soc}(k+1) &= \hat{soc}(k) - \alpha_{j} \hat{v}(k) dt,
    % \end{align}
    % \end{subequations}
    %\begin{subequations}
    \begin{align}\label{eq:lowlevelunicyle}
    \hat{x}_{0,i}(k+1)& = \hat{m}_{0,i}(\hat{x}_{0,i}(k), u_{0,i}(k))
    % \hat{z}(k+1) &= \hat{z}(k) + \hat{v}(k) \cos{(\hat{\theta}(k))} dt \\ 
    % \hat{y}(k+1) &= \hat{y}(k) + \hat{v}(k) \sin{(\hat{\theta}(k))} dt \\
    % \hat{\theta}(k+1) &= \hat{\theta}(k) + \delta(k) dt \\
    % \hat{v}(k+1) &= \hat{v}(k) + a(k) dt \\
    % \hat{soc}(k+1) &= \hat{soc}(k) - \alpha_{j} \hat{v}(k) dt,
    \end{align}
    %\end{subequations}
    % where the inputs $u_{0,j}(k) = [a(k), \delta(k)]$ are acceleration and steering, respectively, and $dt$ represents a constant  update rate for $\hat{x}_{0,j}(k)$. Note that the battery drain at each time step is proportional to velocity, and may be vehicle-dependent.
    subject to the inputs $u_{0,i}(k)$.
    % The model estimate $\hat{m}_0$ combines physics with learning. Specifically, we consider $\hat{m}_0$ to be of the form
    % \begin{align}
    %     \hat{x}_0(k+1)& = \hat{m}_0(\hat{x}_0(k), u(k), \alpha)
    % \end{align}
    % where $\alpha$ is an unknown model parameter that must be improved from collected route data.
    
    \item $\hat{\pi}_{0,i}$ describes the plan state for each agent, consisting of the evolution of the state since the beginning of the task and a forecast of the states until the end of the task:
    \begin{align*}
        \hat{\pi}_{0,i}(k) = & [\hat{x}_{0,i}(0), \hat{x}_{0,i}(1), \dots, \hat{x}_{0,i}(k), \\
        & ~~~~~~~~ \bar{\hat{x}}_{0,i}(k+1|k), \dots, \bar{\hat{x}}_{0,i}(T|k)],
    \end{align*}
    where $k$ is the current time step. The state forecast for agent $i$ is constructed using $\hat{m}_{0,i}$. 
\end{itemize}

\textbf{Learning:} As additional observations $\hat{o}_{0,i}$ are received by the agent's Situational Awareness component, we construct a more accurate system model $\hat{m}_{0,i}$. 
%we construct a more accurate system model $\hat{m}_0$ (specifically of $\rho$). 
Learning can be triggered based on a threshold on the prediction errors, and agent models can be updated one at a time.

%\subsection{$DM_0$}
At each time step $k$, the Decision Making component $DM_{0,i}$ selects an actuation input to apply to agent $i$ based on the goals outlined in $\pi_{1,i}$ (\ref{eq:planagenti}) and its current estimated state $\hat{x}_{0,i}(k)$. 
Specifically, at each time step $k$ while agent $i$ is moving from the $p$th reference node $D_{ref,p,i}$ to $D_{ref,p+1,i}$, the lower level plan $\pi_{0,i}(k)$ contains the open-loop sequence of inputs $u_{0,ref,i}(k)$ that should be applied to agent $i$ in order to reach the next reference node prescribed by $\pi_{1,i}$:
\begin{align}
    \pi_{0,i}(k) = u_{0,ref,i}(k),
\end{align}
where $u_{0,ref,i}(k)$
contains the solutions to the optimal control problem:
\begin{subequations}\label{eq:llocgen}
\begin{align}
    {\bf{u}}^{\star}(k),~{\bf{x}}^{\star}(k), ~T^{\star}&(k) =   \nonumber\\
    \argmin_{{\bf{u}}, {\bf{x}}, T}~~ & T \label{oc}\\
    s.t. ~~~ & x_{k|k} = \hat{x}_{0,i}(k) \label{initial_state}\\
    & x_{t+1|k} = \hat{m}_{0,i} (x_{t|k}, u_{t|k}) \label{evolution}\\
    & x_{t|k} \in \mathcal{X} ~~~~~~~\forall t \in 
    \{k, \dots, T\} \label{state_constraints}\\
    & u_{t|k} \in \mathcal{U} ~~~\forall t \in \{k, \dots, T-1\} \label{input_constraints}\\
    & x_{T|k} \in D_{ref,p+1,i} \label{time_goal} \\
    & SOC_{T|k} \geq SOC_{ref,p+1,i} \label{soc_goal}\\
    & T \leq k_{ref, p+1, i}\label{actual_time_goal}\\
    %x_{0,ref,i} = {\bf{x}}^{\star} = [x^{\star}&(t_p), x^{\star}(t_p+1), \dots, x^{\star}(T)]\\
    u_{0,ref,i}(k) = {\bf{u}}^{\star}(k) &\nonumber\\
    = [u^{\star}_{k|k}&,~u^{\star}_{k+1|k}, \dots, u^{\star}_{T-1|k}] 
\end{align}    
\end{subequations}
where we use the notation $x_{T|k} \in D_{ref,p+1,i}$ in (\ref{time_goal}) to indicate that the agent is predicted to reach node $D_{ref,p+1,i}$ at time step $T$. This may correspond to a set of geographical constraints as well as potential velocity constraints (e.g. the agents must come to a stop to be considered ``at a node"). 
The optimal control problem plans a state and input trajectory from the agent's current state to the next assigned reference node $D_{ref,p+1,i}$ while satisfying time and energy constraints imposed by $\pi_{1,i}$. The optimal control problem (\ref{eq:llocgen}) minimizes the time for the agent to reach $D_{ref,p+1,i}$ (\ref{time_goal}) without draining the battery below the reference state of charge according to the plan $\pi_{1,i}$ (\ref{soc_goal}). With a slight abuse of notation, constraint (\ref{actual_time_goal}) ensures the planned trajectory arrives at the reference node $D_{ref,p+1,i}$ before the reference time $k_{ref,p+1,i}$.
At time $k$, only the first calculated reference input is applied to the agent:
\begin{align}
    u_{0,i}(k) = u^{\star}_{k|k}.
\end{align}
At time $k+1$ a new $u_{0,ref,i}(k+1)$ is calculated with (\ref{eq:llocgen}).

%This actuation input is stored in $\pi_{0,i}(k)$. 

% Each time robot $i$ reaches a node assigned to it in $\pi_{1,i}$, the robot's Decision Making component $DM_{0,i}$ creates a new Level 0 plan, $\pi_{0,i}$. Each $\pi_{0,i}$ contains predicted states $x_{0,ref,i}$ and inputs $u_{0,ref,i}$ that should be applied to robot $i$ in order to reach the next reference node prescribed by Level 1 in $\pi_{1,i}$:
% \begin{align}
%     \pi_{0,i} = [x_{0,ref,i}, u_{0,ref,i}].
% \end{align}
% %Specifically, $\pi_{0,i}$ contains the solution to the optimal control problem tracking the reference plan $\pi_{1,i}$. 
% % \begin{align}
% %     \pi_{0,i} = [x_{0,ref,i}, u_{0,ref,i}].
% % \end{align}
% Specifically, after robot $i$ reaches its $p$th assigned node $D_{ref,p,i}$ at time $t_p$, the Decision Making Component $DM_{0,i}$ creates a new $\pi_{0,i}$ plan for the robot to reach node $D_{ref,p+1,i}$. This plan 

\textbf{Re-planning:}
At this level, re-planning consists of assigning a new input sequence to be applied to each agent. 
The open-loop solution to (\ref{eq:llocgen}) contains an estimated open-loop state trajectory, which can be evaluated against $\hat{\pi}_{0,i}(k)$ in order to determine whether the plan must be revised. Re-planning can be triggered while the agent is moving between nodes. 
Sec.~\ref{ssec:lowlevel} outlines a computationally efficient way to solve (\ref{eq:llocgen}) that can be implemented in real time.

\section{Future Research}
In the next section we introduce an example that illustrates our proposed hierarchical control systems for energy-constrained autonomous systems. 

Here we want to highlight that Sec.~\ref{sec:hierarchical} introduced a novel  framework which addresses the operational challenges arising from limited energy storage capacities and the constraints of recharge rates in autonomous systems. It establishes the methodological basis for incorporating learning mechanisms at various levels of the control hierarchy, aiming to optimize performance and resource utilization within these constraints.

The principal objective of this paper is to formally define the need of integrating learning mechanisms across different tiers of hierarchical control, contextualized within energy limitations. 
The intent is to augment the efficiency of autonomous systems operations through a control architecture that employs learning for enhanced task allocation and energy management. 
The work here leads to theoretical questions that remain unaddressed within the current scope. These include:
\begin{enumerate}
    \item The definition of criteria for safe and efficient planning within hierarchical control systems.
    \item The elucidation of methods for integrating learning without leading to system infeasibility or instability.
    \item The development of real-time capable approaches that consider the computational constraints of low-level controllers and communication delays.
\end{enumerate}
% what to say about the theory paper here? 
While some preliminary questions regarding infeasibility are explored in \cite{vallon2024learning2} in the simplified context of a two-layer deterministic hierarchy, this paper utilizes a more realistic example to highlight specific aspects of the problem space at large. We not only demonstrate the applicability of our framework but also emphasize its potential to refine the problem definition and expand the solution space for energy-efficient autonomous system operation.

\section{Example} \label{example}
% \textbf{Mark (with help of Charlott), please write the specific instance of the example you have implemented by carefully referring  to the previous section in the same order.}
% example: upper is modeled as graph, the states are.. the optimization problem is ... the learnig is imlemented as .. the replanning is.

% General description:
% \begin{itemize}
%     \item Fully connected graph? 
%     \item How many tasks M? Charging stations?
%     \item Nodes represent geographical locations, distance between them
%     \item How many agents N?
%     \item What is the goal? (To maximize number of tasks while satisfying some kind of constraint?)
% \end{itemize}

% \subsection{Task-Assignment level}

%The ideas discussed in previous sections can be demonstrated in the following example. 
We consider an electric delivery fleet of $M=2$ vehicles needing to serve a number of customers every day. The vehicles start each day fully charged in a central depot, where they must return at the end of each day after visiting their assigned customers.
The fleet's goal is to serve as many customers as possible while minimizing the total route time (including charging time, if needed), and meeting the following constraints: 
\begin{enumerate}
    \item each vehicle's battery state of charge must remain positive throughout the route, and
    \item the time duration of each route must not exceed a daily time limit. 
\end{enumerate}
We refer to each day during which routes are assigned and executed as an ``iteration". 
We model the set of interest points to be visited as nodes on a fully connected graph $G(D, \mathcal{R})$, with nodes $D$ and edges $\mathcal{R}$.
The nodes comprise of the depot node $D_0$, as well as $20$ customer nodes $\{D_1,...,D_{20}\}$ and two charging stations $\{D_{21},D_{22}\}$. 
%The route $R_{i,j}, (i,j)\in \mathcal{R}$ connects node $i$ and node $j$, representing the possibility to visit the customers or charging stations associated with nodes $i$ and $j$ in sequence. 
%We refer to the set of all edges as $(i,j) \in \mathcal{R}$. 
% The problem is described with a fully weighted connected graph $G$, that comprises of the depot node - $D_0$, $\{D_1,...,D_N\}$ are the customers nodes and $\{S_1,S_2\}$ are 2 charging stations nodes. $G$ is a simple graph with an some route connecting every 2 nodes - $R_{i,j}$ is the route connecting node i to node j. The set of all edges is $(i,j) \in \mathcal{R}$. 
The position of each node is chosen randomly with uniform distribution over a 2D grid. All units are normalized. 
%$200 \times 200$ 
%Every customer has to be visited only once, while charging stations can be skipped.

%To travel in the 2D map, the following dynamic system is used:
The dynamics of each vehicle are described by
\begin{subequations}\label{eq:CTdyn}
    \begin{align}
    & \dot{z}(t) = v(t)  \cos{\theta(t)} \\
    & \dot{y}(t) = v(t)  \sin{\theta(t)} \\
    & \dot{soc}(t) =  -\alpha_{i,j}  v(t) \quad (i,j) \in \mathcal{R} \label{eq:socdelta}\\
    & \dot{v}(t) = a(t) \\
    & \dot{\theta} = \delta(t)
\end{align}
\end{subequations}
where $z(t)$ and $y(t)$ are the horizontal and vertical position on the map at time $t$, $v(t)$ the velocity, $soc(t)$ the vehicle's state of charge, $\theta(t)$ the velocity direction, and $a(t)$ and $\delta(t)$ are the acceleration and steering command. 
The energy consumption rate (\ref{eq:socdelta}) is proportional to the vehicle's velocity with a route-specific value $\alpha_{i,j}$. 
For each route $R_{i,j}$, the true parameter $\alpha_{i,j}$ changes each day (each iteration), and is set at the beginning of each iteration by sampling from a uniform distribution
\begin{align}
    \alpha_{i,j} \sim U(\alpha^{i,j}_{min},\alpha^{i,j}_{max}) \quad \forall (i,j) \in \mathcal{R} \label{alpha}.
\end{align}
The pair ($\alpha^{i,j}_{min},\alpha^{i,j}_{max}$) for each edge is set only once for the whole simulation such that:
\begin{align}
    \alpha_{min} \leq \alpha^{i,j}_{min} < \alpha^{i,j}_{max} \leq \alpha_{max} \quad \forall (i,j) \in \mathcal{R},
\end{align}
where $\alpha_{min} = 0.2$ and $\alpha_{max}=0.5$.
%$\alpha_{min}$ and $\alpha_{max}$ are known to the control designer while 
%($\alpha^{i,j}_{min},\alpha^{i,j}_{max}$)
%are not known.
%The higher level controller uses a conservative initial estimate of $(\alpha_{min},\alpha_{max})$ for each  $(\alpha^{i,j}_{min},\alpha^{i,j}_{max})$; these estimates will then be updated as data is collected through iterative experience.

We assume a constant charging rate when the vehicle is at a charging station: $\dot{soc}(t) = 3$.

The vehicles are subject to the state and input constraints 
\begin{align}
    \label{eq:CTcons}
    \mathcal{X} & = \begin{cases}
			-\pi~ [rad] \leq \theta \leq \pi ~[rad] \\
            0  \leq v \leq V_{max}^{i,j} ~[m/s]\quad (i,j) \in \mathcal{R}  \\
            0 \leq soc \leq 100
		    \end{cases} \\
    \mathcal{U} &= \begin{cases}
        -5 ~[m/s^2] \leq a \leq 5 ~[m/s^2] \\
        -0.5\pi~ [rad/s] \leq \delta \leq 0.5\pi ~[rad/s]
    \end{cases}
\end{align}
where the maximal velocity to travel along an edge $(i,j)$ is randomized at the beginning of the simulation and remains constant:
\begin{align}
    V_{max}^{i,j} \sim U(V_{min},V_{max}) \quad \forall (i,j) \in \mathcal{R}
\end{align}
where $V_{min}=3$ and $V_{max}=10$. 
%At the first iteration, the maximal velocity is assumed to be $V_{min}$ for every robot. 

Traveling along each route $R_{i,j}$ requires an associated non-negative time $T_{i,j}$ and energy $E_{i,j}$, which depend in part on the effectiveness of the lower level controller as well as the parameters $\alpha_{i,j}$ and $V^{i,j}_{max}$. 
To prevent situations in which a vehicle is not able to return to the depot by the end of the allotted time, or in which a vehicle's battery runs out in the middle of a tour, the upper level controller initially uses conservative estimates $\hat{T}_{i,j}$ and $\hat{E}_{i,j}$. These estimates may be so conservative that they make it infeasible to plan routes that visit all customers while satisfying battery and time constraints. 
The goal is to utilize the iterative nature of delivery tasks to improve estimates $\hat{T}_{i,j}$ and $\hat{E}_{i,j}$ using data from each iteration of route $R_{i,j}$. 
As more data is collected with each iteration, the time and energy cost estimates improve and the upper level controller is able to plan less conservative routes.

\subsection{Upper Level: Task Assignment}
The upper level controller has two main objectives. 
The first objective is to use collected data from previous iterations to update estimates for time and energy expenditures $\hat{T}_{i,j}, \hat{E}_{i,j}$ along each route $R_{i,j}$. This is performed by the upper level's $SA_1$ component (see Figure~\ref{fig:framework}).
The second objective is to plan routes for the agents that visit as many customers as possible in the daily allotted time, while avoiding vehicles running out of charge. This is performed by the upper level's $DM_1$ component, which sends the planned routes to each agent $i$ via $\pi_{1,i}$ (\ref{eq:planagenti}).

% CV would suggest something like: 
% After each iteration, $SA_1$ updates the upper level controller's estimates of the vehicle energy consumption parameters ($\alpha^{i,j}_{min},\alpha^{i,j}_{max}$) and maximal velocity $V_{max}^{i,j}$ along each edge. 
% Specifically, we utilize the knowledge that $\alpha^{i,j}_{min}$, $\alpha^{i,j}_{max}$, and $V_{max}^{i,j}$ are each sampled from uniform distributions parameterized by lower and upper bounds. 
% [Before beginning iteration 1, initialized to conservative estimates...] 

The $SA_1$ component uses observations $\hat{o}_{1}$ to construct estimates of the energy and time expenditures along each route. The $SA_1$ assumes a probability distribution for time $T_{i,j}$ and energy $E_{i,j}$ and learns the bounds of the support of the distribution as data is collected during each iterations. In our experiment, we have used a uniform distribution. The collected data from previous iterations is the total time and energy for each travelled edge $R_{i.j}$. These measured values are then used in the estimation of time and energy models.

To estimate the bounds on time and energy we adapt the algorithm in ~\cite{Monimoy}. The estimation of the upper and lower bound of uniformly distributed variable $X \sim U(a,b)$ is done using $n$ independent measured samples $\{X_i\}_{i=1}^n$:
\begin{subequations}\label{eq:pe}
 \begin{align}
    \hat{a} = \max(a_{min},M - \frac{M-m}{\sqrt[n-1]{1-P_E}}) \\
    \hat{b} = \min(b_{max},m + \frac{M-m}{\sqrt[n-1]{1-P_E}})
\end{align}   
\end{subequations}
where $M$ and $m$ are the maximal and minimal measured values from the $n$ independent samples, and ($a_{min},b_{max}$) are the conservative bounds. Paper ~\cite{Monimoy} present a proof when $a_{min}=-b_{max}$; a proof for a general $(a_{min},b_{max})$ of this estimation algorithm can be found in the Appendix. 
$P_E$ is the confidence interval used in the estimation. This estimation can be used when two or more measurements are available. 
The confidence interval $P_E$ in (\ref{eq:pe}) can be interpreted as a learning rate, where higher $P_E$ values mean the time and energy model estimation takes more samples before converging, but the upper level is likely to produce a feasible plan. 
In Sec.~\ref{sec:simulation}, we compare results from two different confidence intervals, $P_E=0.95$ and $P_E=0.5$.
%representing a robust, slow learning rate, and $P_E=0.5$ representing a faster learning rate.

Once calculated, the estimated time and energy expenditure estimates are used by the $DM_1$ component in the generation of the plan $\pi_{1}$ at the beginning of each day; specifically, the estimated upper bound for the time and energy along each route is used. 
The calculation of plan $\pi_{1}$ is a variation of the Energy Constrained Vehicle Routing Problem (ECVRP), can be expressed as a Mixed Integer Problem (similar to \cite{MONTOYA201787}) and is NP-Hard. To reduce calculation times, the $DM_1$ uses the clustering algorithm for solving the Robust Energy Capacitated Vehicle Routing Problem (RECVRP) described in \cite{ClusterRECVRP}. 

The clustering algorithm calculates a plan for each vehicle in two steps. The first step clusters the customer nodes in $G$ into $M=2$ groups, where each group will be served by a single vehicle. 
The second step calculates the plan $\pi_{1,i}$ for each vehicle $i$ that serves the maximum number of customers possible within the group assigned to the vehicle while minimizing the tour time. This plan is created by solving a vehicle routing problem subject to the estimated time and energy requirements $\hat{T}_{i,j}$ and $\hat{E}_{i,j}$ for each route $R_{i,j}$.
%The basis to the tour calculation is the time ($T_{i,j}$), energy ($E_{i,j}$) it takes to travel along the edge $R_{i,j}$. 
%The $DM_1$ uses the estimated upper bound for the energy and time for each route to produce a robust tour. The $DM_1$ also outputs the time and energy plan for the lower level to follow ($T_{ref}, D_{ref}, SOC_{ref}$) - this plan is based on the pessimistic bounds estimated in $SA_1$ and that the lower level should try to improve, if possible. 
At the beginning of each day a new tour is calculated using the data collected over previous days.
%to improve the plan from previous day - which could be visiting more customers, or finding faster tour for the existing ones.

Note that a better estimation of the bounds reduces the probability of generating plans that cannot be executed by the lower level control system within the required time and energy budgets. As more data is collected on each edge, tighter bounds can be calculated, and a less conservative plan can be created (visiting more customers, reducing charging times, reducing total travel times).
%the tighter the bounds become and a better tour can be used - visit more customers, reduce charging times or reduce travel times. 

%In this experiment, we show the results for 2 different confidence intervals - $P_E=0.95$ that represent a robust slow learning rate and $P_E=0.5$ that represent fast learning rate. 
%The final product of the $SA_1$ is the most up-to-date time and energy for each route. 
%Once calculated, the most recent time and energy expenditure estimates for each route are sent to the $DM_1$ component.
%This information is sent to the $DM_1$ component, see \ref{fig:framework}.

\subsection{Lower Level: Path Planning and Following}\label{ssec:lowlevel}
The upper level outputs a plan $\pi_{1,i}$ for each agent $i$ in the form of (\ref{eq:planagenti}), consisting of an ordered sequence of nodes the vehicle will visit, along with the minimum required state of charge and expected reach time for each assigned node.
%along with the minimum required state of charge ($SOC_{ref,j,i}$) and expected reach time ($k_{ref,j,i}$) for each node $D_{ref,j,i}$, where $j$ corresponds to the $j$th customer in plan $\pi_{1,i}$. 
%Each time agent $i$ reaches a new node, the lower level controller converts 
At each time step $k$, the lower level controller corresponding to agent $i$ converts this plan into actuation commands for the agent $i$ by solving the optimal control problem (\ref{eq:llocgen}), which minimizes the time to reach the next assigned reference nodes while satisfying time and energy constraints put forth by the high-level controller in $\pi_{1,i}$. 

Here we implement a data-driven Model Predictive Control (MPC) controller inspired by \cite{8039204} that approximates the solution to (\ref{eq:llocgen}). 
This approach allows us to take advantage of the iterative nature of the task assignment problem. Each route between two nodes will likely be traveled many times by the agents; data collected from these routes should be utilized to improve estimates about the vehicles' capabilities and improve performance. Each time the agent traverses a particular edge is referred to as a lower level ``iteration". If the lower level controllers can safely complete routes faster and in less time at each iteration, more high-level tasks can be completed.
An additional benefit of this is computational; if $k_{ref,p,i}$ is large for a particular upcoming assigned node $D_{ref,p,i}$, the complexity of solving (\ref{eq:llocgen}) grows, because a longer horizon has to be considered. In contrast, an MPC controller with fixed horizon $N$ has bounded complexity at 
each time step. 

For clarity, we will drop the edge indices in the notation for the remainder of the section, as the proposed methods will be applied in the same manner for each edge. 
We will also drop the agent-specific notation, as the control design is agent-agnostic.

At the lower level, the state of the agent at time $k$ during an iteration $r$ is given by
\begin{align*}
    \hat{x}^r_{0}(k+1) = [z^r_{0}(k), y^r_{0}(k), \theta^r_{0}(k), v^r_{0}(k), SOC^r_{0}(k)],
\end{align*}
where $z^r_{0}(k)$ and $y^r_{0}(k)$ denote the agent's position, $\theta^r_{0}(k)$ its heading angle, $v^r_{0}(k)$ the velocity, and $SOC^r_{0}(k)$ the current state of charge. 
The lower level uses an estimated model 
\begin{align*}
    \hat{x}^r_{0}(k) = \hat{m}^r_{0}(\hat{x}^r_{0}(k), u^r_{0}(k))
\end{align*}
where $\hat{m}^r_{0}$ is an Euler discretization of (\ref{eq:CTdyn}) using a sampling time of $dt = 0.1s$ and an iteration-dependent estimate $\hat{\alpha}^r$ for the energy consumption rate calculated as follows.

At each iteration $r$, the agent's $SA_0$ component utilizes collected observations $\hat{o}_0$ to update its estimate of $\hat{\alpha}^r$. Because the true energy consumption rate $\alpha$ is stochastic (\ref{alpha}), the lower level controller utilizes a conservative estimate. 
This conservative estimate is calculated in two steps.
First, using collected signals ${\bf{v}}_{0}^{r-1}$, ${\bf{soc}}_{0}^{r-1}$ corresponding to the agent's measured velocity and state of charge trajectories during the previous iteration $(r-1)$ of traversing the edge, the $SA$ component estimates $\alpha^r$ using least-squares regression on (\ref{eq:socdelta}). Thus $\alpha^r$ estimates the agent's energy consumption rate at iteration $r-1$. 
The lower layer's conservative estimate for $\hat{\alpha}^r$ is then set as
\begin{align}\label{eq:alphaestimate}
    \hat{\alpha}^r = \max \{ \hat{\alpha}^{r-1}, \alpha^r\}.
\end{align}
At each iteration $r$, the lower level model uses an energy consumption rate estimate corresponding to the highest energy consumption rate seen in a previous iteration. 
%At iteration $r=1$, we set $\hat{\alpha}^{r-1} = 0.5\alpha_{min} + 0.5\alpha_{max}$.

The lower level $DM_0$ component converts the higher level plan $\pi_{1}$ into a lower level plan $\pi_{0}$ and actuation inputs to apply to the agent. 
At time step $k_p$, when the agent has reached its $p$th reference node $D_{ref,p}$, the  $DM_0$ component solves an optimal control problem as in (\ref{eq:llocgen}), using ${\hat{m}}_0^r$. The resulting optimal trajectory ${\bf{x}}^{\star}$ is used at iteration $r$ to initialize a set $\mathcal{SS}^r$ between nodes $D_{ref,p}$ and $D_{ref,p+1}$:
\begin{align}\label{eq:batch}
    \mathcal{SS}^{r} = {\bf{x}}^{\star}(k_p) \text{ from }(\ref{eq:llocgen}) \text{ using }(\ref{eq:alphaestimate}).
\end{align}
The set $\mathcal{SS}^{r}$ is only calculated \textit{once} per edge traversal, when the agent is at node $D_{ref,p}$. 
% \begin{align*}
%     z_{0,j}(k+1) &= z_{0,j}(k) + v_{0,j}(k) \cos{(\theta_{0,j}(k))} dt \\ 
%     y_{0,j}(k+1) &= y_{0,j}(k) + v_{0,j}(k) \sin{(\theta_{0,j}(k))} dt \\
%     \theta_{0,j}(k+1) &= \theta_{0,j}(k) + \delta_{0,j}(k) dt \\
%     v_{0,j}(k+1) &= v_{0,j}(k) + a_{0,j}(k) dt \\
%     soc_{0,j}(k+1) &= soc_{0,j}(k) - \alpha v_{0,j}(k) dt,
% \end{align*}
% where the inputs $u_{0,j}(k) = [a_{0,j}(k), \delta_{0,j}(k)]$ are acceleration and steering, respectively. We use a time step of $dt = 0.1s$. Note that the battery drain at each time step is proportional to velocity.
% The system state and inputs are subject to constraints $\hat{x}_{0,j}(k) \in \mathcal{X}_{0},~ u_{0,j}(k) \in \mathcal{U}_0$, where
% \begin{align*}
%     \mathcal{X}_{0} & = \begin{cases}
% 			-\pi~ [rad] \leq \theta \leq \pi ~[rad] \\
%             0  \leq v \leq ~\hat{V}_{max}  \\
%             0 \leq soc \leq 100
% 		    \end{cases} \\
%     \mathcal{U}_0 &= \begin{cases}
%         -5 ~ \leq a \leq 5 ~ \\
%         -0.5\pi~ [rad/s] \leq \delta \leq 0.5\pi ~[rad/s]
%     \end{cases}
%     \{x ~| ~ \theta \in [-\pi, \pi],~ v \in [0, 10],~ soc \in [0, 100]\} \\
%      \mathcal{U}_{0} &= \{u~ | ~a \in [-3, 3] ,~ \delta \in [-\pi/2, \pi/2]\}.
%\end{align*}
We introduce the cost function
\begin{align*}
    h(\hat{x}_{0}, D) = \begin{cases}
        0 & \hat{x}_0 \in D \\
        1 & \text{else}
    \end{cases}
\end{align*}
which assigns a cost of $1$ to a state $\hat{x}_0$ if it is not in the node $D$, and a cost of $0$ otherwise. Here, we specifically define $\hat{x}_0 \in D$ to indicate that the agent has reached the geographical location corresponding to node $D$ and come to a full stop ($v_0 = 0$). We use this cost function to 
define an iteration-dependent value function over $\mathcal{SS}^{r}$ as
\begin{align}
    V^{r}(\hat{x}_0) = \begin{cases}
        \sum_{t=k}^{T^r} h(x^{\star}_{t}, D_{ref,p+1}) & \text{if } \hat{x}_0 = x^{\star}_{k}, ~x^{\star}_{k}\in \mathcal{SS}^{r} \\
        \infty & \text{else},
    \end{cases}
\end{align}
% \begin{align}
%     V^{0}(x^0_{0}(k)) = \begin{cases}
%         \sum_{t=k}^{T^0} h(x^0_{0}(t), j) & \text{if } x^0_{0}(k) \in \mathcal{SS}^{0} \\
%         \infty & \text{else}
%     \end{cases}
% \end{align}
which associates each state in $\mathcal{SS}^{r}$ with the number of remaining time steps required to reach $D_{ref,p+1}$ according to (\ref{eq:batch}).
%$x_0 \in \mathcal{SS}^{r}$ with the number of time steps that were required to reach the next reference node $D_{ref}$ from $x_0$ during iteration $r$.
The set $\mathcal{SS}^r$ and function $V^r$ are updated at each iteration as new trajectory data becomes available.

% Re-planning
At each time step $k$ of the $r$th traversal along an edge, the agent's $DM_0$ component calculates the plan $\pi_{0}(k)$ by solving an optimal control problem that approximates (\ref{eq:llocgen}) by searching for a trajectory of length $N$ from the current state $\hat{x}_0^r(k)$ minimizing the sum of a stage cost and terminal cost:
% \begin{subequations}\label{eq:lmpc}
% \begin{align}
% u^{\star} = arg \min_{u, \lambda} & \sum_{k=t}^{N-1} h(x_{k|t}) + V^{r-1}(x_{t+N|t}) \\
%     s.t.~~ & x_{k+1|t} = \hat{m}_{0,j} (x_{k|t}, u_{k|t}), \\
%     & u_{k|t} \in \mathcal{U}_0, ~ \forall k \in [t, t+ N-1] \\
%     & x_{k|t} \in \mathcal{X}_0, ~ \forall k \in [t, \dots, t+N-1] \\
%     & x_{t|t} = \hat{x}_{0}^r(k) \\
%     & x_{t+N|t} = \mathcal{SS}_{\neg soc}^{r-1}\lambda \label{eq:lmpcss}\\
%     & soc_{t+N|t} \geq \mathcal{SS}_{soc}^{r-1} \lambda \label{eq:lmpcss_soc}\\
%     & ||\lambda||_1 = 1 \\
%     u^r_{0}(k) = u^{\star}_{t|t} ~~&
% \end{align}
% \end{subequations}
\begin{subequations}\label{eq:lmpc}
\begin{align}
{\bf{u}}^{r,\star}(k) = \argmin_{{\bf{u}}} & \sum_{t=k}^{N-1} h(x_{t|t}) + V^{r}(\bar{x}) \\
    s.t.~~ & x_{k|k} = \hat{x}_{0}^r(k) \nonumber\\
    & x_{t+1|k} = \hat{m}^r_{0} (x_{t|k}, u_{t|k}) \nonumber\\
    & x_{t|k} \in \mathcal{X} ~~~~~~~~~ \forall t \in \{k, \dots, k+N\} \nonumber\\
    & u_{t|k} \in \mathcal{U} ~~~~ \forall t \in \{k, \dots, k+ N-1\} \nonumber\\
    & \bar{x} = [z_{k+N|k}, y_{k+N|k}, \theta_{k+N|k}, \nonumber\\
    &  ~~~~~~~~~ ~~~~~~~~~~~~~v_{k+N|k}, \bar{SOC}]  \label{eq:lmpcss}\\
    & \bar{x} \in \mathcal{SS}^{r} \label{eq:ss2}\\
    & SOC_{k+N|k}\geq \bar{SOC} \label{eq:lmpcss_soc}\\
    u^r_{0}(k) = u^{r,\star}_{k|k} ~~~~~&
\end{align}
\end{subequations}
%where $\lambda = \{0,1\}^{T^0+1}$ is a binary variable of length equal to the magnitude of $\mathcal{SS}^{r-1}$. 
The controller (\ref{eq:lmpc}) searches at each time step $k$ for an $N$-step trajectory beginning at the current state estimate $\hat{x}_{0}^r(k)$ and ending in a state in $\mathcal{SS}^{r}$, enforced by constraints (\ref{eq:lmpcss}), (\ref{eq:ss2}). 
This constraint ensures that the planned trajectory ends in a state from which the agent is expected to be able to reach the next node while satisfying state and input constraints, by following the remainder of the reference plan (\ref{eq:batch}) stored in $\mathcal{SS}^{r}$.
Note that the state-of-charge constraint is imposed as an inequality (\ref{eq:lmpcss_soc}); this allows the controller (\ref{eq:lmpc}) to plan trajectories that ensure the agent will reach the next task node in the time and with \textit{at least} the remaining state of charge expected by the high-level task assignment planner, while satisfying all other state constraints. 
At each time step $k$, only the first calculated input is applied, $u^r_{0}(k) = u^{\star}_{t|t}$, and a new control input is calculated using (\ref{eq:lmpc}) at time step $k+1$. 
A formal description of the feasibility, cost-improvement, and convergence properties of this low-level controller (\ref{eq:lmpc}) can be found in \cite{8039204}. 

The plan $\pi_{0}(k)$ concatenates the actuation sequence corresponding to the $N$-step plan ending in $\bar{x} \in \mathcal{SS}^{r}$ with the actuation sequence calculated in (\ref{eq:batch}) beyond $\bar{x}$, denoted ${\bf{u}}^{\star}_{\bar{k}:}$. 
\begin{align}
        \pi_{0}(k) = u_{0, ref}(k) = [{\bf{u}}^{r,\star}&(k), {\bf{u}}^{\star}_{\bar{k}:} ]\label{eq:lowlevelpi0}
\end{align}

We note that the low-level control approach outlined here requires solving a single batch problem (\ref{eq:batch}) at the beginning of each edge traversal. While the agent is traversing the edge, a short-horizon MPC (\ref{eq:lmpc}) is calculated at each time step, reducing computational complexity as compared with solving (\ref{eq:llocgen}) at each time step. 
Additionally, this iterative data-driven controller formulation introduces learning at the lower level. The approach allows us to explore faster routes between two nodes than the batch reference without sacrificing feasibility of the higher level plan. These improvements can be utilized by the higher level controller plan a more efficient task assignment.

\section{Simulation Results}\label{sec:simulation}

This section presents the results for the example problem and hierarchical controller described in Sec.~\ref{example}. 
A Python simulation was created to implement the problem.
%and hierarchical controller. 
At the end of each iteration $r$ (day $r$), the data collected by the lower level from implementing plan $\pi_1$ for day $r$ is sent to the upper level to refine the problem parameters and calculate a new plan for day $r+1$. 

\begin{figure}[htp]\label{fig:Trajs}
\begin{subfigure}{\textwidth}
\includegraphics[width=0.45\textwidth]{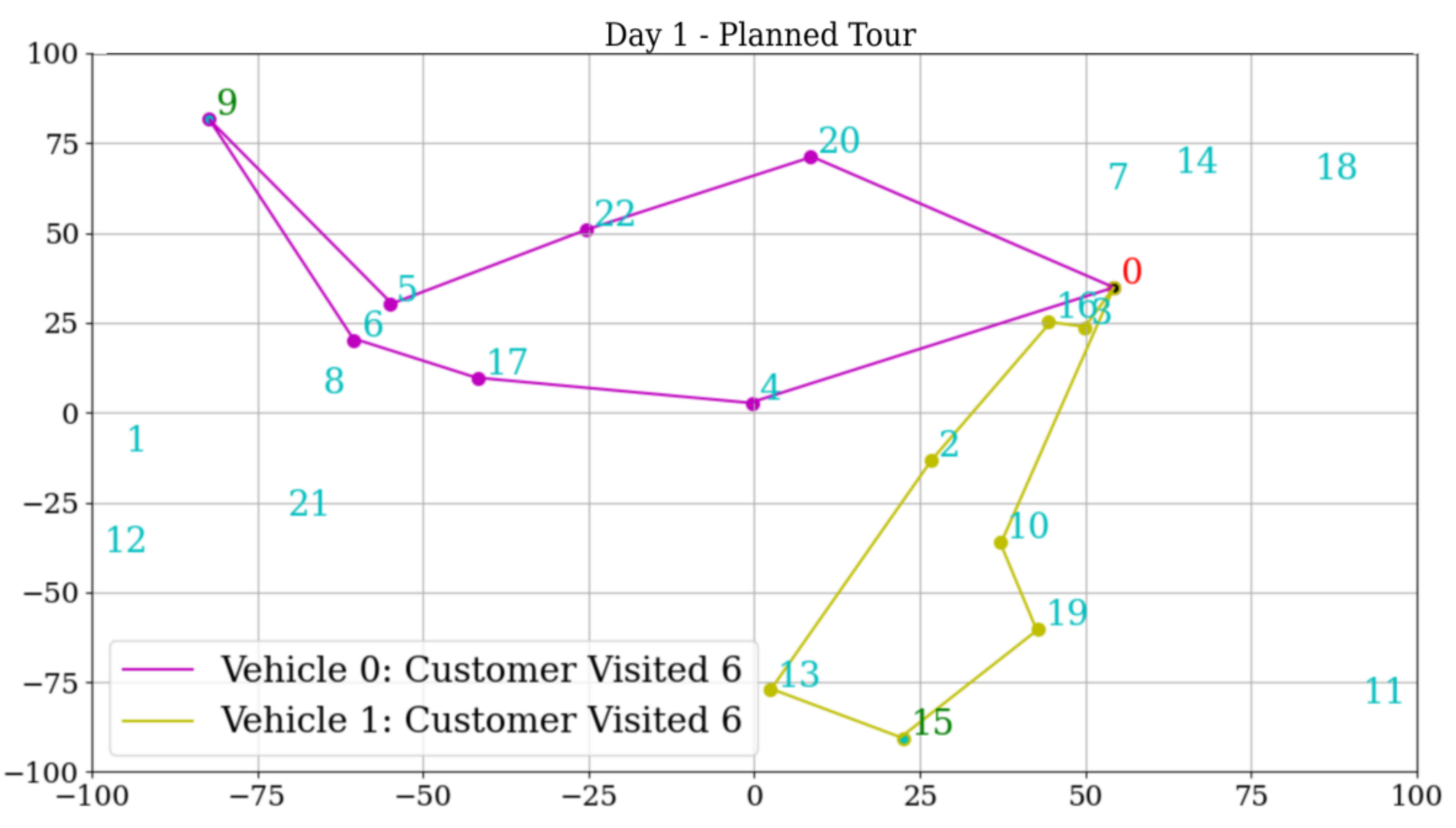}
%\caption{Figure A}
\end{subfigure}
\bigskip
\begin{subfigure}{\textwidth}
\includegraphics[width=0.45\textwidth]{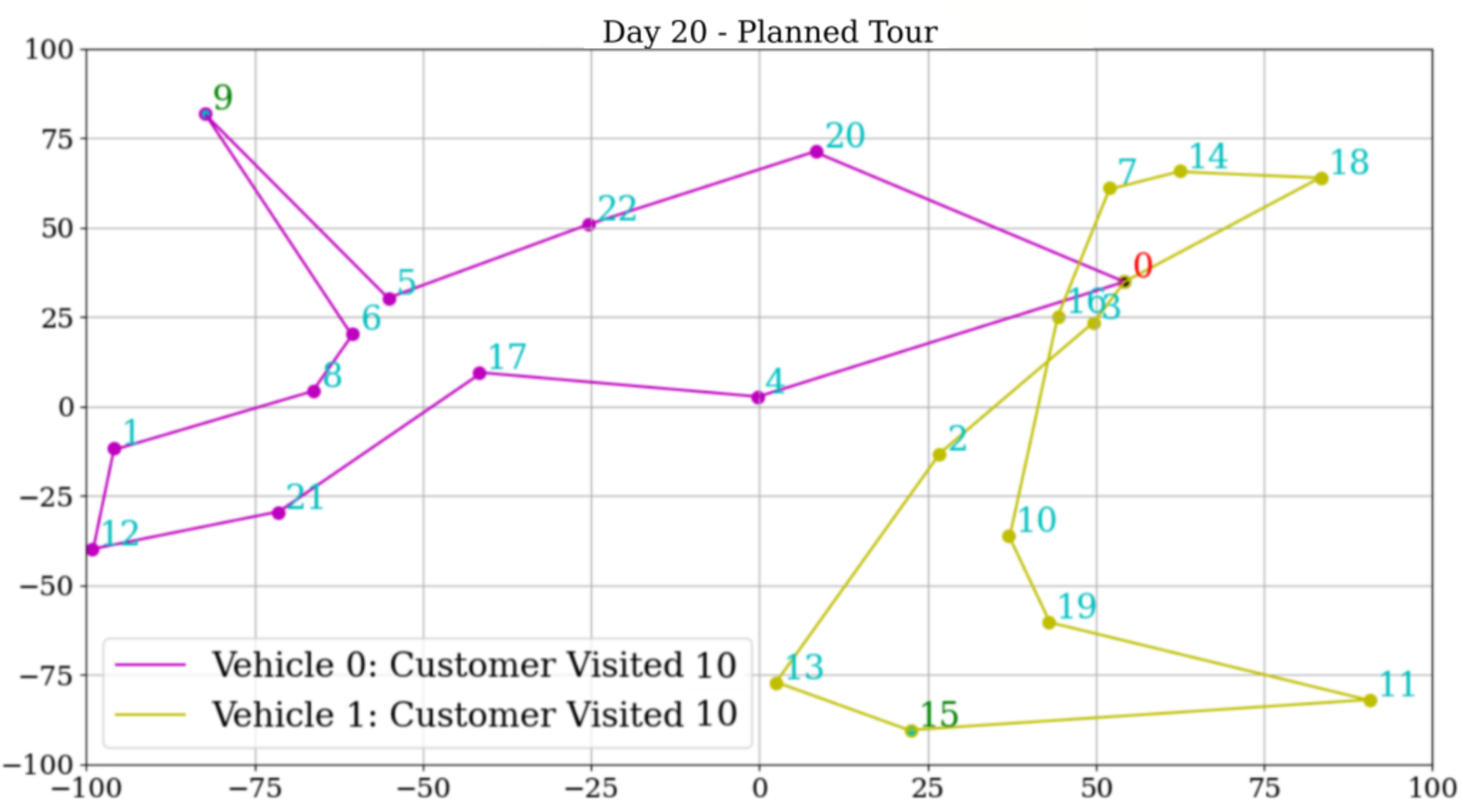}
%\caption{Figure B}
\end{subfigure}
\caption{Planned tour on day one vs. day 20. The red node is the depot, blue nodes are customers, and green nodes are charging stations.}\label{fig:Trajs}
\end{figure}

\textbf{Improved Upper Level Task Assignment: }
Figure~\ref{fig:Trajs} shows the routes planned on day one and day 20 by an upper level controller using confidence interval $P_E=0.95$. 
On day one (top image), because of initial conservative assumptions made on the time and energy costs associated with each route, the upper level controller is unable to plan a route that includes all customers. Each vehicle visits only six customers. 
%It can be seen that the initial plan doesn't include all customers because of the conservative assumption made on the time and energy model. 
However, as vehicle data is collected during more task iterations, the time and energy cost estimates are improved and lowered, and an increasing number of customers can be visited in the allotted time and without running out of charge. 
%[Add a sentence about the two types of improvement that are happening. Low-level is finding better routes, and upper-level is then able to improve model estimates.]
%measured data accumulates, the actual model is learned and more and more customers can be visited. 
%Figure \ref{fig:energy_plan} shows the number of visited customers at each day (blue line for $P_E=0.95$). The initial tour planned on day one only visits half of the desired customers. After 15 iterations, the upper level controller is able to plan a tour that can visit all customers in the same day.
This increase results from improvements at each level of the hierarchy: as the lower level finds faster and more energy-efficient routes between nodes, the upper level's estimates of time and energy expenditures along each route decreases, allowing it to safely plan tours that visit more customers.

\begin{figure}[ht!]
	\centering
\includegraphics[width=0.45\textwidth]{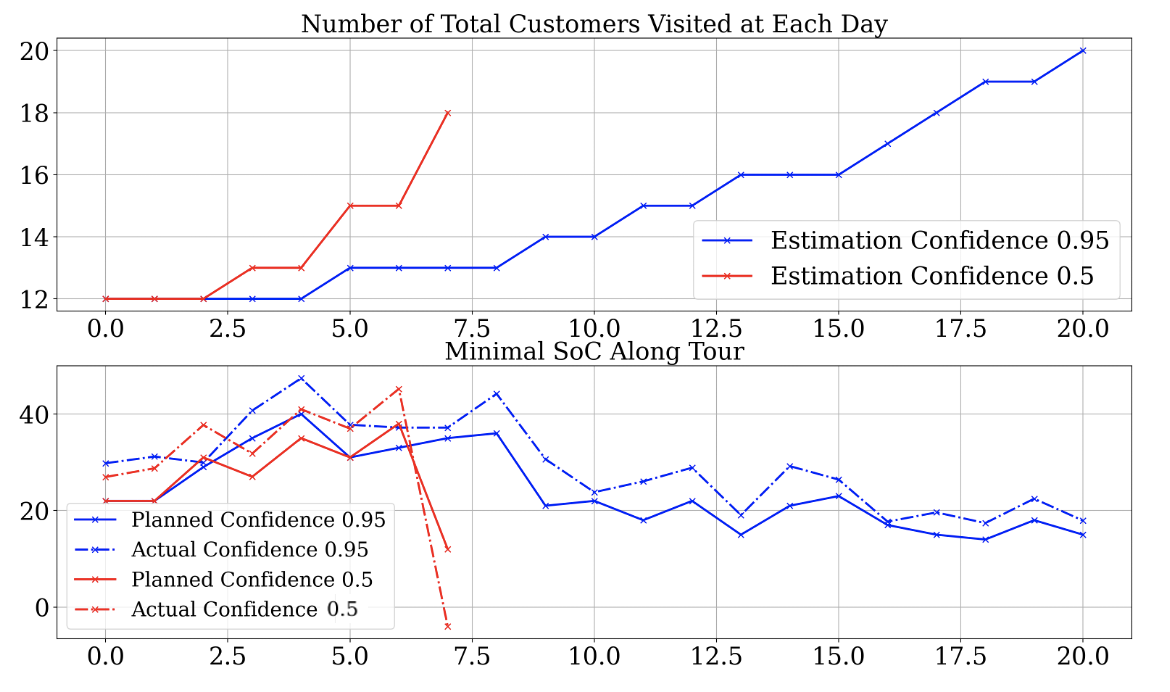}
	\caption{Planned vs. actual energy trajectory along tours. Tours planned with $P_E = 0.5$ initially quickly increase the number of visited customers, but become infeasible on day seven.}
	\label{fig:energy_plan}
\end{figure}

\textbf{Importance of Learning Rate Selection: }
The upper level controller uses the confidence interval $P_E$ (\ref{eq:pe}) as a learning rate that dictates how much weight to give newly collected trajectory data when updating the parameters $\hat{\alpha}_{min}^{i,j}$, $\hat{\alpha}_{max}^{i,j}$, and $V^{i,j}_{max}$. 
Setting the learning rate $P_E$ incorrectly can result in estimates being updated too optimistically, and $DM_1$ creating a plan $\pi_1$ that is infeasible for the vehicle to track. %being created by $DM_1$.
%The learning rate, defined through the confidence parameter $P_E$ in the learning algorithm in the $SA_1$ component can produce an infeasible plan for some values of the confidence probability $P_E$.
% Figure~\ref{fig:bounds_est} shows the estimation of the energy consumption rate model bounds on one of the edges ($R_{2,13}$) along the tour as function of the confidence $P_E$. The plot shows that actual measured consumption rate $\alpha$ (green dots), the actual consumption rate bounds (cyan lines) and the estimated bounds (blue and red) for different confidence level as function of the day. It can be seen that the estimation of the upper bound in the case with smaller $P_E$ on day 7 is significantly less than the actual upper bound of the energy consumption rate, and on day 8 (circled) the actual consumption rate happens to be much higher than the estimation. This means that on this edge, the $DM_1$, on day 8, calculated a plan that was based on a very optimistic energy consumption rate that,in fact, was much higher. This error along with similar estimation errors on other edges of the tour resulted in a depletion of the battery on one of the robots.
This is demonstrated in Fig.~\ref{fig:energy_plan}. 
Figure~\ref{fig:energy_plan} compares the expected and actual energy consumption of a particular vehicle across multiple days, when different learning rates $P_E$ are used. 
The top figure plots how many customers the upper level planned for the agent to visit on each day, while the lower figure plots the lowest expected (and actual) state of charge on each day. 
Blue lines correspond to the case where routes were planned using $P_E=0.95$; in these cases, the energy plan is robust enough so that the vehicle's actual state of charge never dips below the upper level plan's expectation. As a result, the planned tour can be executed by the vehicle.
The number of nodes visited increases gradually as more data is being collected and the energy model is slowly refined. 
In the case where the lower confidence parameter $P_E=0.5$ is used to update the estimates for $\hat{T}_{i,j}$ and $\hat{E}_{i,j}$, the learning occurs more quickly; the number of visited customers initially increases very quickly over the first seven days. 
However, the optimistic update rate eventually leads the upper level to produce a plan on day eight that is not actually feasible, and the vehicle unexpectedly runs out of charge in the middle of the planned tour, causing an iteration failure.
%(on day 8) that according to the energy model is feasible (the minimal SOC along the tour is more than 0) but actually is not.

\begin{figure}[t!]
    \centering
    \includegraphics[width=0.45\textwidth]{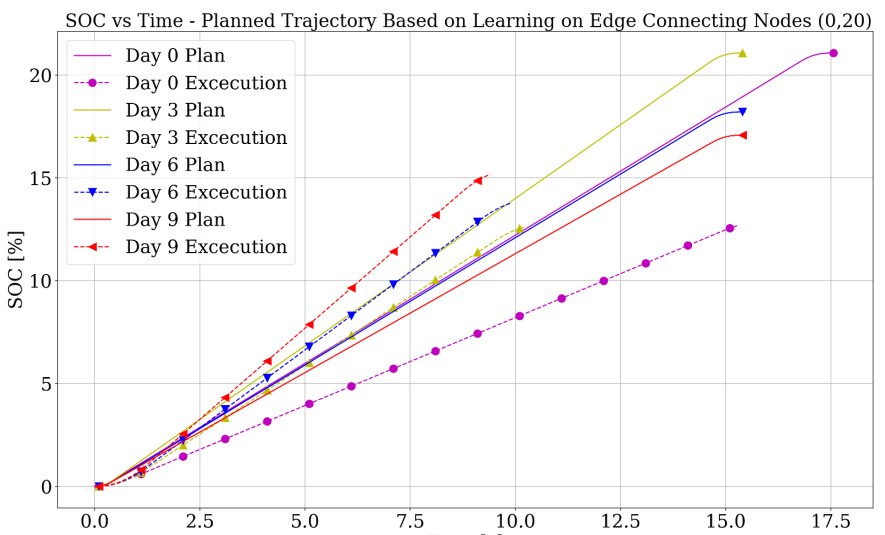}
    \caption{The planned (solid lines) and executed (dashed lines) time vs. state of charge trajectories along the edge (0, 20) over nine days.}
    \label{fig:lmpcfigs}
\end{figure}

\textbf{Robust Lower Level Path Planning/Following: }
Figure~\ref{fig:lmpcfigs} depicts the lower level's planned trajectories on the route between node $D_0$ and node $D_{20}$, for $P_E = 0.95$. As more data is collected and the upper bound on the time and energy expenditure estimates is reduced, the lower level plans routes 
%(\ref{eq:batch}) 
each day that take less energy and less time than the previous day. This is depicted in the solid lines. 

To maintain closed-loop feasibility while tracking the planned routes despite randomness in the true energy depletion rate $\alpha_{0,20}$, the lower level controller 
%(\ref{eq:lmpc}) 
uses a conservative estimate in its vehicle model.
This ensures that the time and energy costs of the lower level controller 
%(\ref{eq:lmpc}) 
in closed-loop with the agent experiencing the true energy depletion rate (\ref{eq:socdelta}) is always lower than those of the planned routes. %(\ref{eq:batch}). 
%This estimate improves as additional low-level trajectory data is collected. 
%Due to the stochasticity of the true parameter $\alpha_{0,20}$, which can vary non-monotonically between iterations, the true time and energy costs associated with the closed-loop trajectories along the edge do not necessarily decrease with each additional iteration. 
%closed-loop time and energy costs depicted in Fig.~\ref{fig:lmpcfigs} are always lower at each iteration than those of the planned trajectories. This is because the planned trajectories are calculated using the most recent robust estimates of the upper level controller. 

% \begin{figure}[ht!]
% 	\centering
% \includegraphics[width=0.45\textwidth]{./figures/Traj_Init.png}
% 	\caption{Planned tour on day one: each vehicle visits six nodes. Red node is the depot, blue nodes are customers, and green nodes are charging stations.}
% 	\label{fig:TrajInit}
% \end{figure}

% \begin{figure}[ht!]
% 	\centering
% \includegraphics[width=0.45\textwidth]{./figures/Traj_Final.png}
% 	\caption{Planned tour on day 20: each vehicle visits 12 nodes. Red node is the depot, blue nodes are customers, and green nodes are charging stations.}
% 	\label{fig:TrajFinal}
% \end{figure}

% \begin{figure}[ht!]
% 	\centering
% \includegraphics[width=0.45\textwidth]{./figures/BoundsEst.png}
% 	\caption{Energy Model Bounds Bounds Estimation Example}
% 	\label{fig:bounds_est}
% \end{figure}

\section{Conclusions}
This paper introduces a modular two-level hierarchical control scheme for energy-constrained autonomous systems, where learning is incorporated at each level of the controller.
We demonstrate the necessity for such a framework in applications where energy storage limitations and slow recharge rates hinder the operational efficiency of such systems, thereby limiting their utilization.
Rather than relying on conservative models to guarantee safety, our framework proposes a data-driven approach, where collected data is used to learn different energy models at each level of the control hierarchy.
In a simulation example, we explore the benefits of our modular framework for efficient electric vehicle routing. 
As shown in the example, integration of learning dynamics must be designed in a way that ensures the effects on other levels of the hierarchical scheme will not result in instability or infeasibility of the system. 
Future work will aim to develop theory regarding the selection of a learning rate to ensure stability and feasibility, and to calculate guarantees on how much initial conservatism can be reduced throughout the learning process.

\section{Appendix}
\label{est_proof}
A proof of the estimation algorithm used in the $SA_1$ component of the hierarchical controller is presented. The energy consumption rate for each edge is a random variable uniformly distributed with unknown bounds $\alpha \sim U(a,b)$, where conservative bounds $(a_{min},b_{max})$ are known:
\begin{align}
    a_{min} < a < b < b_{max}
\end{align}
the goal is to estimate the bounds $(a,b)$ from $n$ independent measurements  $\{\alpha_i\}_{i=1}^n$, with some confidence interval of $0<P_\alpha<1$.
\begin{prop}
The lower and upper bounds of a uniformly distributed variable $X \sim U(a,b)$ can be estimate with confidence $1-\alpha$ from $n$ independent identically distributed (i.i.d) samples by:
\begin{align}
    \hat{a} = M - \frac{M-m}{\sqrt[n-1]{\alpha}} \\
    \hat{b} = m + \frac{M-m}{\sqrt[n-1]{\alpha}}
\end{align}
where $M$ and $m$ are the maximal and minimal valued samples out of the $n$ independent samples.
\end{prop}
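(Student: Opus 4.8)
I would prove the two one-sided statements $\Pr[\hat a \le a] \ge 1-\alpha$ and $\Pr[\hat b \ge b] \ge 1-\alpha$ separately, i.e.\ read the proposition as a per-bound guarantee: the joint event $\{\hat a\le a\}\cap\{\hat b\ge b\}$ does \emph{not} hold with probability $1-\alpha$ in general (for $n=2$, $\alpha=\tfrac12$ it has probability $\tfrac13$), whereas the per-bound version matches the stated formula with $\sqrt[n-1]{\alpha}$ exactly. I would then reduce to a single case: the reflection $X\mapsto a+b-X$ preserves $U(a,b)$ and sends $(m,M)\mapsto(a+b-M,\,a+b-m)$, hence $\hat a\mapsto a+b-\hat b$ and $\{\hat a\le a\}\mapsto\{\hat b\ge b\}$, so it suffices to handle the upper bound.

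For the upper bound, set $R=M-m$ and $c:=\sqrt[n-1]{\alpha}\in(0,1)$ (with $n\ge2$). The first step is to rewrite the failure event: $\hat b<b \iff m+R/c<b \iff R<c(b-m) \iff X_{(n)}<(1-c)X_{(1)}+cb$. Writing $\tau(s):=(1-c)s+cb$, the key algebraic observation is $\tau(s)-s=c(b-s)$, so the threshold $\tau(X_{(1)})$ depends on the data only through the sample minimum, and the gap to the right endpoint $b$ is scaled by exactly $c$.

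The second step is the probability computation, which I would carry out by integrating the joint density of the extreme order statistics, $f_{X_{(1)},X_{(n)}}(s,t)=n(n-1)(b-a)^{-n}(t-s)^{n-2}$ on $a\le s\le t\le b$, over the region $\{t<\tau(s)\}$:
\begin{align*}
\Pr[\hat b<b] &= \int_a^b \frac{n(n-1)}{(b-a)^n}\left(\int_s^{\tau(s)}(t-s)^{n-2}\,dt\right)ds \\
&= \int_a^b \frac{n\,c^{\,n-1}(b-s)^{n-1}}{(b-a)^n}\,ds = c^{\,n-1} = \alpha,
\end{align*}
using $\tau(s)-s=c(b-s)$ and $\int_a^b(b-s)^{n-1}\,ds=(b-a)^n/n$. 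Hence $\Pr[\hat b\ge b]=1-\alpha$, and by the reflection symmetry $\Pr[\hat a\le a]=1-\alpha$. The same fact admits a more intuitive derivation avoiding the joint density: conditioning on $X_{(1)}=m$ leaves the other $n-1$ samples i.i.d.\ $U(m,b)$, so $X_{(n)}<\tau(m)$ iff all of them fall in $[m,\tau(m))$, an event of probability $\big((\tau(m)-m)/(b-m)\big)^{\,n-1}=c^{\,n-1}=\alpha$, independent of $m$.

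Finally I would note the harmless effect of the conservative clipping actually used in $SA_1$: since $a_{min}<a<b<b_{max}$, we have $\max(a_{min},\hat a)\le a \iff \hat a\le a$ and $\min(b_{max},\hat b)\ge b \iff \hat b\ge b$, so the clipping leaves the relevant events unchanged and the $1-\alpha$ confidence is preserved verbatim for any valid conservative $(a_{min},b_{max})$ — this is exactly the point that removes the symmetry assumption $a_{min}=-b_{max}$ of \cite{Monimoy}. The computation is short; the only places deserving care are the cancellation $\tau(s)-s=c(b-s)$ (this is what makes the tail probability equal to $\alpha$ exactly and free of the data) and, if one prefers the conditioning argument, a clean justification of the conditional law of the top order statistic given the minimum — which the joint-density route sidesteps entirely. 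I do not anticipate a serious obstacle beyond getting this bookkeeping right.
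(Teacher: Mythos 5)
Your proof is correct, and its core mechanism is the same as the paper's: the event $\hat b < b$ is controlled by the fact that, given the sample minimum, the remaining $n-1$ observations are uniform on $(X_{(1)},b)$, so the maximum falls below the threshold $m + c(b-m)$ with probability $c^{n-1}=\alpha$. The paper reaches this in three stages (known $a=0$, known general $a$, then unknown $a$ via a conditioning claim), whereas you verify the same number directly by integrating the joint density of $(X_{(1)},X_{(n)})$ over $\{t<\tau(s)\}$ — which, as you note, sidesteps the unjustified conditional-law assertion in the paper's third step; your computation checks out. Beyond that, your write-up supplies three things the paper omits: (i) an explicit reading of the confidence statement as two one-sided, per-bound guarantees, with a correct counterexample ($n=2$, $\alpha=\tfrac12$ gives joint coverage $\tfrac13$) showing the simultaneous version fails — this reading is consistent with the paper, which also treats the two bounds separately; (ii) an actual argument for the lower bound via the reflection $X\mapsto a+b-X$, where the paper only says the proof is ``similar''; and (iii) the observation that clipping by the conservative bounds, $\max(a_{min},\hat a)$ and $\min(b_{max},\hat b)$, leaves the coverage events unchanged whenever $a_{min}<a<b<b_{max}$. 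Point (iii) deserves emphasis: the Appendix is advertised as extending \cite{Monimoy} from $a_{min}=-b_{max}$ to general conservative bounds, yet its proof never mentions $a_{min}$ or $b_{max}$ at all; your one-line clipping argument is precisely the missing step that justifies the estimator (12) actually used by $SA_1$.
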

\begin{proof}
This proof is based on \cite{Monimoy}. We will proof the estimation for the upper bound - the proof for the lower bound is similar. The samples are sorted $X_1\leq X_2\leq...\leq X_n$ from the lowest to the highest value samples. The proof is divides into 3 parts. First, we assume $a=0$ and is known. This implies that for some $c \in (0,1)$ and the maximal sample $X_n$:
\begin{align}
    \mathcal{P}(\frac{X_n}{b} \leq c) = \mathcal{P}(\bigcap_{i=1,..,n}\frac{X_i}{b}<c) = \prod_{i=1}^{n} \mathcal{P}(\frac{X_i}{b} \leq c) = c^n
\end{align}
where, the last equality is true because $\frac{X_i}{b} \sim U(0,1)$. Setting $c = \sqrt[n]{\alpha}$:
\begin{align}
    \mathcal{P}(\frac{X_n}{b} \leq \sqrt[n]{\alpha}) = \alpha
\end{align}
Therefore, we get:
\begin{align}
    \mathcal{P}(\sqrt[n]{\alpha} \leq \frac{X_n}{b} \leq 1) = 1-\alpha
\end{align}
which gives us:
\begin{align}
    \mathcal{P}(X_n \leq b \leq \frac{X_n}{\sqrt[n]{\alpha}}) = 1-\alpha
\end{align}
meaning, if $a=0$ and known, the $1-\alpha$ confidence estimation for $b$ is:
\begin{align}
    \hat{b} = \frac{X_n}{\sqrt[n]{\alpha}}
\end{align}
In the second part of the proof, we allow $a$ to take any known value $a<b$. This implies that the estimation for the upper bound $b$ is:
\begin{align}
    \hat{b} = a + \frac{X_n-a}{\sqrt[n]{\alpha}}
\end{align}
In the third and final part, there isn't any assumption on the unknown lower bound $a$. We claim that conditioning the n-1 largest samples on $X_1$ gives a uniform distribution with the following bounds:
\begin{align}
    X_i \stackrel {i.i.d}\sim U(a,b) \Rightarrow X_i(x|X_1) \stackrel {i.i.d}\sim U(X_1,b)
\end{align}
now, that we have $n-1$ independent uniformly distributed samples with known lower bound, we can estimate the upper bound by:
\begin{align}
    \hat{b} = X_1 + \frac{X_n-X_1}{\sqrt[n-1]{\alpha}}
\end{align}
This concludes the proof.

\end{proof}

\section*{ACKNOWLEDGMENT}
Part of this research was carried out at the Jet Propulsion Laboratory, California Institute of Technology, under a contract with the National Aeronautics and Space Administration (80NM0018D0004).

\bibliographystyle{IEEEtran}
\bibliography{IEEEfull,bib}

\end{document}